\documentclass{cccg21}
\usepackage{graphicx}[color]
\usepackage{color}
\usepackage{amssymb,amsmath}
\usepackage{tcolorbox}
\tcbset{boxrule=0.25mm,left=0.4em,right=0.4em,top=0.4em,bottom=0.4em,enlarge left by=0.5em,width=\linewidth-1em}

\graphicspath{{figures/}}

\usepackage[utf8]{inputenc} 
\usepackage{CJKutf8} 

\usepackage{hyperref}
\urlstyle{same}
\usepackage{cleveref}
\usepackage{subcaption}
\usepackage[percentage]{overpic}



\def\defn#1{\textbf{\textit{\boldmath #1}}}
\let\emph=\defn

\newcounter{section-preserve}
\newcounter{theorem-preserve}
\newcommand{\blank}[1]{}
\newtoks\magicAppendix
\magicAppendix={}
\newtoks\magictoks
\newif\iflater
\laterfalse
\long\def\later#1{\magictoks={#1}%
	\edef\magictodo{\noexpand\magicAppendix={\the\magicAppendix 
			\the\magictoks%
	}}
	\magictodo}
\long\def\both#1{\magictoks={#1}%
	\edef\magictodo{\noexpand\magicAppendix={\the\magicAppendix 
			\noexpand\setcounter{theorem-preserve}{\noexpand\arabic{theorem}}%
			\noexpand\setcounter{theorem}{\arabic{theorem}}%
			\noexpand\setcounter{section-preserve}{\noexpand\arabic{section}}%
			\noexpand\setcounter{section}{\arabic{section}}%
			\noexpand\let\noexpand\oldsection=\noexpand\thesection
			\noexpand\def\noexpand\thesection{\thesection}
			\noexpand\let\noexpand\oldlabel=\noexpand\label
			\noexpand\let\noexpand\label=\noexpand\blank
			\the\magictoks%
			\noexpand\setcounter{theorem}{\noexpand\arabic{theorem-preserve}}%
			\noexpand\setcounter{section}{\noexpand\arabic{section-preserve}}%
			\noexpand\let\noexpand\thesection=\noexpand\oldsection
			\noexpand\let\noexpand\label=\noexpand\oldlabel
	}}
	\magictodo
	\the\magictoks}
\def\magicappendix{\latertrue \the\magicAppendix}



{\makeatletter
	\gdef\xxxmark{%
		\expandafter\ifx\csname @mpargs\endcsname\relax 
		\expandafter\ifx\csname @captype\endcsname\relax 
		\marginpar{xxx}
		\else
		xxx 
		\fi
		\else
		xxx 
		\fi}
	\gdef\xxx{\@ifnextchar[\xxx@lab\xxx@nolab}
	\long\gdef\xxx@lab[#1]#2{{\bf [\xxxmark #2 ---{\sc #1}]}}
	\long\gdef\xxx@nolab#1{{\bf [\xxxmark #1]}}
	\long\gdef\xxx@lab[#1]#2{}\long\gdef\xxx@nolab#1{}%
}

\let\realbfseries=\bfseries
\def\bfseries{\realbfseries\boldmath}
\makeatletter
\let\real@titlestyle=\@titlestyle
\def\@titlestyle{\real@titlestyle\boldmath}
\makeatother


\author{Erik D. Demaine\thanks{Massachusetts Institute of Technology, USA, \protect\url{edemaine@mit.edu}}
	\and
	Jayson Lynch\thanks{University of Waterloo, Canada, \protect\url{jayson.lynch@uwaterloo.ca}}
\and
Mikhail Rudoy\thanks{LeapYear Technologies, \protect\url{mrudoy@gmail.com}}
\and
Yushi Uno\thanks{Osaka Prefecture University, \protect\url{uno@cs.osakafu-u.ac.jp}}
}

%
%
%

\title{Yin-Yang Puzzles are NP-complete}

%
%

\begin{document}

\maketitle

\begin{abstract}
  We prove NP-completeness of Yin-Yang / Shiromaru-Kuromaru
  pencil-and-paper puzzles.  Viewed as a graph partitioning problem,
  we prove NP-completeness of partitioning a rectangular grid graph
  into two induced trees (normal Yin-Yang),
  or into two induced connected subgraphs (Yin-Yang without $2 \times 2$ rule),
  subject to some vertices being pre-assigned to a specific tree/subgraph.
\end{abstract}

\section{Introduction}

The Yin-Yang puzzle is an over-25-year-old type of pencil-and-paper logic puzzle,
the genre that includes Sudoku and many other puzzles made famous by e.g.\
Japanese publisher Nikoli.%
\footnote{However, Yin-Yang is not a Nikoli puzzle.}
\figurename~\ref{fig:spiral} shows a simple example of a puzzle and its
solution.
In general, a Yin-Yang puzzle consists of a rectangular $m \times n$ grid
of unit squares, called \emph{cells},
where each cell either has a black circle, has a white circle, or is empty.
The goal of the puzzle is to fill each empty cell with either a black circle
or a white circle to satisfy the following two constraints:
\begin{itemize}
\item \emph{Connectivity constraint}:
  For each color (black and white), the circles of that color
  form a single connected group of cells,
  where connectivity is according to four-way orthogonal adjacency.
\item \emph{$2 \times 2$ constraint}:
  No $2 \times 2$ square contains four circles of the same color.
\end{itemize}
In this paper, we prove NP-completeness of deciding whether a Yin-Yang
puzzle has a solution, with or without the $2 \times 2$ constraint.

%
%

\subsection{Graph Partitioning}

We can view Yin-Yang puzzles as a type of graph partitioning problem,
by taking the dual graph with a vertex for each unit-square cell and
edges between orthogonally adjacent vertices/cells.
The result is a rectangular $m \times n$ grid graph, with some vertices
precolored black or white.
The goal is to complete a black/white coloring of the vertices
subject to dual versions of the constraints.
The connectivity constraint above
constrains each color class to induce a connected subgraph,
so with this constraint alone, the problem is to partition the graph's vertices
into two connected induced subgraphs.
We thus also prove NP-completeness of this graph partitioning problem:

\begin{tcolorbox}
\emph{Grid Graph Connected Partition Completion}:
Given an $m \times n$ grid graph $G = (V,E)$ and
given a partition of $V$ into $A$, $B$, and $U$, 
is there a partition of $V$ into $A'$ and $B'$ such that 
$A \subseteq A'$, $B \subseteq B'$, and
$G[A']$ and $G[B']$ are connected?
\end{tcolorbox}

\begin{figure}
  \centering
  \includegraphics{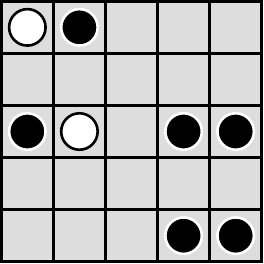}\hfil
  \includegraphics{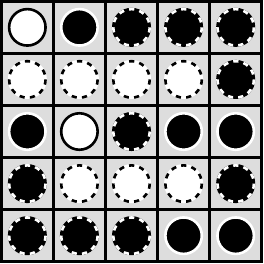}
  \caption{A simple Yin-Yang puzzle (left) and its unique solution (right).
    Circles added in the solution have dotted boundaries.}
  \label{fig:spiral}
\end{figure}

The $2 \times 2$ constraint forbids induced 4-cycles in each color class,
which is the \emph{thin} constraint of \cite{arkin2009not,demaine2017thin}.
Any larger-than-4 induced cycle in a color class must enclose a vertex of the
opposite color, so by the connectivity constraint,
only one color class can have such a cycle
and it can have only one such cycle, which (if it exists) must be exactly
the boundary vertices of the $m \times n$ grid graph.
If we exclude the possibility of a single outer cycle (e.g., by restricting to
instances that precolor at least one boundary vertex of each color), then the 
problem (with both constraints) is to partition the vertices
of a rectangular $m \times n$ grid graph into two induced (connected) trees.
We also prove NP-completeness of this graph partitioning problem:

\begin{tcolorbox}
\emph{Grid Graph Tree Partition Completion}:
Given an $m \times n$ grid graph $G = (V,E)$ and
given a partition of $V$ into $A$, $B$, and $U$, 
is there a partition of $V$ into $A'$ and $B'$ such that 
$A \subseteq A'$, $B \subseteq B'$, and
$G[A']$ and $G[B']$ are trees?
\end{tcolorbox}

\subsection{History}

The origin of Yin-Yang is not clear. 
An early example of this puzzle is from 1994 in the (discontinued)
Japanese puzzle magazine \textit{Puzzler} \cite{yin-yang-original}
(or its original form in 1993).
This reference calls the puzzle by the domestic Japanese name
``\begin{CJK}{UTF8}{ipxm}白丸黒丸\end{CJK}''
(``Shiromaru-Kuromaru'', which means ``white circle / black circle''). 
It seems that the puzzle or slight variations have been invented independently 
many times under different names. 
Most recently, it is often referred to as the ``Yin-Yang'' puzzle, 
and we could find the puzzle introduced in some puzzle books, magazines, and websites, e.g., \cite{yin-yang-example}. 

The computational complexity of puzzles has seen significant study, partly for the recreational element but also because many puzzles have direct connection to important problems such as geometric packing/partitioning or path/tree drawing under constraints.
Surveys have been written on the topic of the computational complexity of games and puzzles \cite{Demaine-Hearn-2009-survey-both,GPC,kendall2008survey}.
Many pencil-and-paper puzzles have been shown to be NP-complete, including:
Bag / Corral \cite{Friedman-2002-corral},
Country Road \cite{YajilinandCountryRoad},
Dosun-Fuwari \cite{iwamoto2018dosun},
Fillomino \cite{Yato-2003},
Fillmat \cite{uejima2015fillmat}
Hashiwokakero \cite{Hashiwokakero},
Heyawake \cite{Holzer-Ruepp-2007},
Hiroimono / Goishi Hiroi \cite{Andersson-2007},
Hitori \cite[Section~9.2]{GPC},
Juosan \cite{iwamoto2020polynomial}
Kakuro / Cross Sum \cite{Yato-Seta-2003},
Kurodoko \cite{Kurodoko},
Kurotto \cite{iwamoto2020polynomial},
Light Up / Akari \cite{McPhail-2005-LightUp},
LITS \cite{McPhail-2007-LITS},
Masyu / Pearl \cite{Friedman-2002-pearl},
Nonogram / Paint By Numbers \cite{Ueda-Nagao-1996},
Numberlink \cite{NumberlinkNP},
Nurikabe \cite{McPhail-2003,Holzer-Klein-Kutrib-2004},
Pencils \cite{is2018paper},
Shakashaka \cite{Shakashaka,NumberlessShakashaka_CCCG2015},
Slitherlink \cite{Yato-Seta-2003,Yato-2003,Witness_FUN2018},
Spiral Galaxies / Tentai Show \cite{SpiralGalaxies},
Sto-Stone \cite{allen2018sto},
Sudoku \cite{Yato-Seta-2003,Yato-2003},
Tatamibari \cite{adler2020tatamibari},
Usowan \cite{iwamoto2018computational},
Yajilin \cite{YajilinandCountryRoad},
and
Yosenabe \cite{Yosenabe}.

Graph partitioning problems involve splitting the vertices of a graph into
subsets based on various criteria. Common criteria include connectedness
of the subgraphs, balancing the number or weight of vertices or edges in the
partitions, and minimizing or maximizing the edge cut between partitions.
Different objectives have a variety of applications; see
\cite{bulucc2016recent} for a survey on the topic.
Connected balanced partitions in grid graphs has been of specific interest
with several NP-hardness results known, including
2-color weighted and only 3-rows \cite{becker1998max},
3-color unweighted \cite{berenger2018balanced}, and
$k$-color in solid grid graphs \cite{feldmann2013fast}.
On the positive side, Hettle et al.~\cite{hettle2021balanced} give
polynomial-time approximation algorithms for partitioning grid and
planar graphs, and apply these to real-world police and
fire-department districting problems.
We use only two colors in a solid unweighted grid, but introduce
the new constraint of pre-assigning more than one vertex to each partition.
When exactly one vertex of each color is already assigned,
the problem is often called a rooted partition problem.

\xxx{Coauthor also suggests it might be related to some of the districting/gerrymandering research if anyone is able to find that connection.}

\section{Hardness Proof}

All of the problems we consider are obviously in NP, by using the partition/solution as a witness.
We show that solving $n \times n$ Yin-Yang puzzles is NP-hard with or without
the $2 \times 2$ constraint by reductions
from the following NP-hard problem \cite{demaine2018tree}:

\begin{tcolorbox}
\emph{Planar 4-Regular Tree-Residue Vertex Breaking (TRVB)}:
Given a planar 4-regular multigraph, is there a subset of vertices that,
after being ``broken'', results in a single connected tree?
\emph{Breaking} a vertex involves deleting that vertex from the graph and adding a degree-1 vertex to each of its neighbors (thus modifying the edges incident to the broken vertex to instead be incident to a newly created degree-1 vertex):

\begin{center}
\includegraphics[width=0.8\linewidth]{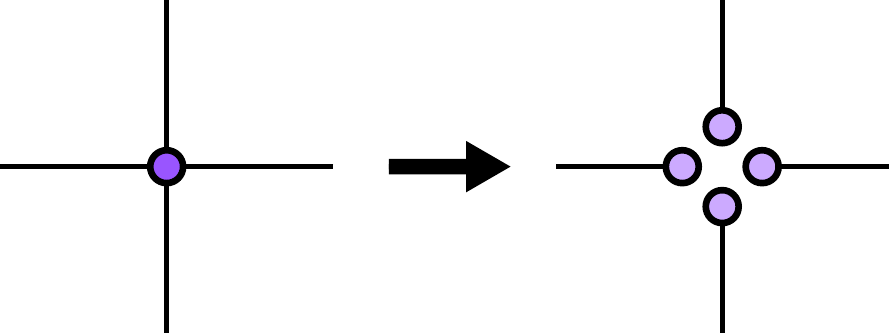}
\end{center}
\end{tcolorbox}

Figure~\ref{fig:TRVB_instance} shows an example of this problem.

\begin{figure}[hbt]
  \centering
  \includegraphics[width=\linewidth]{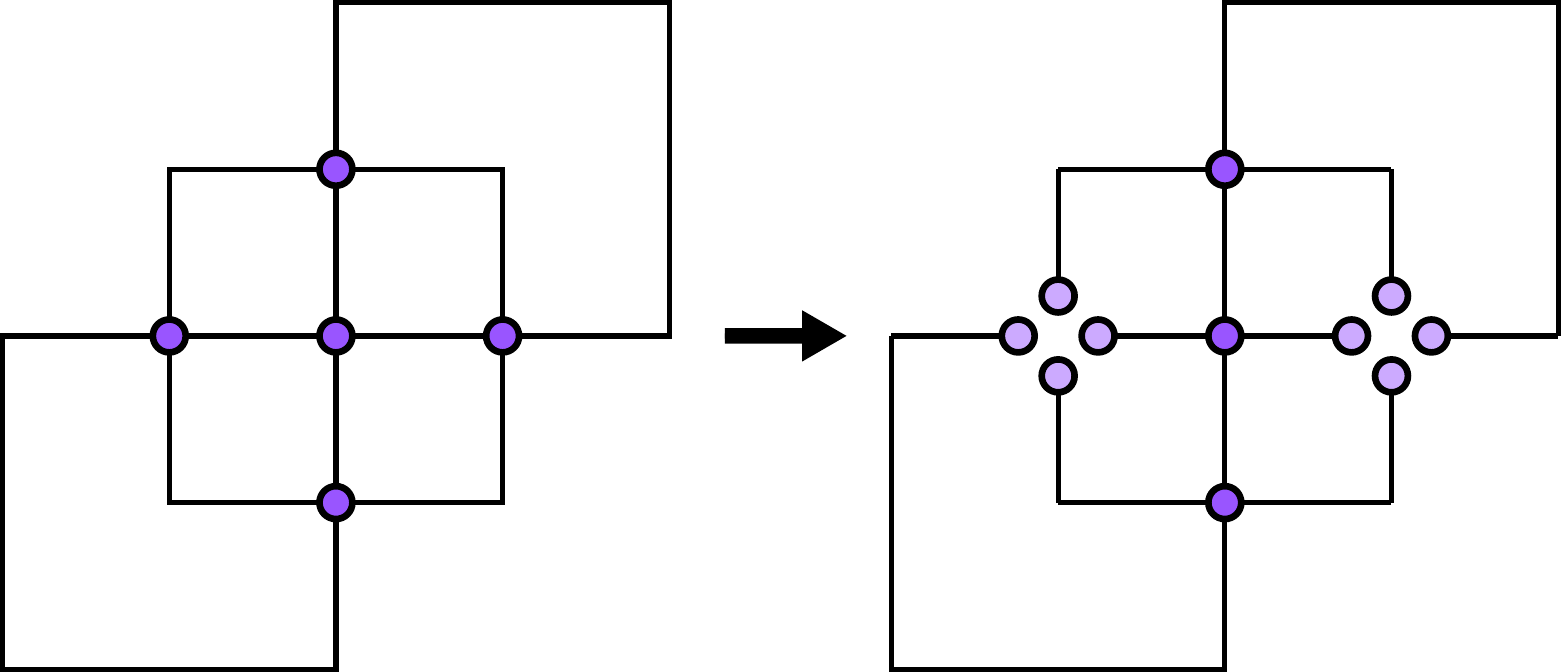}
  \caption{A 5-vertex instance of Planar 4-Regular TRVB (left)
    and a solution (right).}
  \label{fig:TRVB_instance}
\end{figure}



We will use the fact that planar maximum-degree-$4$ multigraphs can be drawn (with grid-routed edges) in a square grid of area $O(n^2)$ in polynomial time,
as in Figure~\ref{fig:TRVB_instance}.
We can use such an embedding for simple graphs (e.g., \cite{Papakostas-Tollis-1998}), and adapt it to multigraphs by subdividing multiple edges and loops, applying the simple embedding, and removing the subdivision vertices.

Before diving into the reductions, we develop a tool that helps force
local solutions to Yin-Yang puzzles:

\begin{lemma} \label{lem:diagnal}
	In a valid Yin-Yang puzzle solution (with or without the $2 \times 2$ constraint), no $2 \times 2$ square can contain diagonally opposite white and black circles.
\end{lemma}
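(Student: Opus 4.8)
The plan is to prove this using only the connectivity constraint (so that the statement holds with or without the $2 \times 2$ rule), together with the planarity of the grid, via a crossing argument closed off by the Jordan curve theorem. Suppose for contradiction that some $2 \times 2$ square has white circles in two diagonally opposite cells $w_1, w_2$ and black circles in the other two cells $b_1, b_2$. Represent each cell by a point at its center and each orthogonal adjacency by the unit segment joining the two centers. Since the white circles form one connected group, there is a simple polygonal curve $P$ from the center of $w_1$ to the center of $w_2$ passing only through white cells; symmetrically there is a simple polygonal curve $Q$ from the center of $b_1$ to the center of $b_2$ passing only through black cells.

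Next I would close $P$ into a Jordan curve and locate $b_1, b_2$ on opposite sides of it. Let $c$ be the common corner of the four cells and let $\alpha$ be the straight diagonal segment of the $2 \times 2$ square from the center of $w_1$ to the center of $w_2$ (through $c$). Because $P$ visits only cell centers and midpoints of edges shared by two white cells, it avoids $c$ and the interiors of $b_1, b_2$, and one checks that $P$ and $\alpha$ meet only at their two common endpoints, so $\gamma := P \cup \alpha$ is a simple closed curve. In a small enough disk $D$ around $c$ we have $\gamma \cap D = \alpha \cap D$, a diameter of $D$; hence $D \setminus \gamma$ is two half-disks, and these must lie in the two different components of the plane minus $\gamma$ (otherwise a whole neighborhood of $c$ would miss one component, contradicting that $c \in \gamma$ is on the common boundary of both). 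Since the four cells meet at $c$ in cyclic order $w_1, b_1, w_2, b_2$, the segment $\alpha$ locally separates $b_1$ from $b_2$, so $\mathrm{int}(b_1)$ and $\mathrm{int}(b_2)$, each connected and disjoint from $\gamma$, lie in different components of the plane minus $\gamma$, and therefore so do the centers of $b_1$ and $b_2$.

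Finally I would extract the contradiction: $Q$ runs between two points in different components of the plane minus $\gamma$, so it must cross $\gamma = P \cup \alpha$; but $Q$ lies in black cells while $P$ lies in white cells, and cells of opposite color share no interior point, no center, and no midpoint of a monochromatic shared edge, so $Q \cap P = \emptyset$; moreover $\alpha \subseteq \mathrm{int}(w_1) \cup \mathrm{int}(w_2) \cup \{c\}$, all of which $Q$ avoids, so $Q \cap \alpha = \emptyset$. Hence $Q$ never meets $\gamma$, a contradiction.

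I expect the conceptual crux to be the local analysis at the corner $c$: showing that closing $P$ with the diagonal $\alpha$ really does put $b_1$ and $b_2$ on opposite sides of the resulting Jordan curve is where planarity does the work. The remaining effort is routine casework verifying disjointness of the various polygonal pieces, i.e., which grid-cell interiors and which edge midpoints and corners each of $P$, $Q$, $\alpha$ is allowed to touch.
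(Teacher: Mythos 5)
Your proposal is correct and is essentially the paper's own argument: the paper's two-sentence proof takes the black path connecting the diagonal black pair and observes that it separates the two white circles, which must themselves be joined by a white path — exactly the crossing/separation contradiction you formalize. You have merely swapped the roles of the colors and filled in the Jordan-curve details (closing the path with the diagonal through the common corner) that the paper leaves implicit.
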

\begin{proof}
	Consider the black circles. These two circles are not orthogonally adjacent, and thus must be connected by a path of black circles. However, this path will separate the pair of white circles which are also not orthogonally adjacent and thus must be connected by their own path of white circles.
\end{proof}

\subsection{Connected Partition: Without $2 \times 2$ Constraint}
\label{sec:nothin}

In this section, we prove NP-hardness of
Grid Graph Connected Partition Completion,
or equivalently, Yin-Yang puzzles without the $2 \times 2$ constraint.
We present this proof first as it is simpler but uses the same ideas.

In this reduction, edges will be represented by orthogonal paths
(\emph{wires}) of black circles,
and everything that is not an edge or vertex gadget
will be filled with white circles.

\begin{figure}
  \centering
  \subcaptionbox{\label{fig:nothin_TRVB_gadget} Vertex gadget}
    {\includegraphics{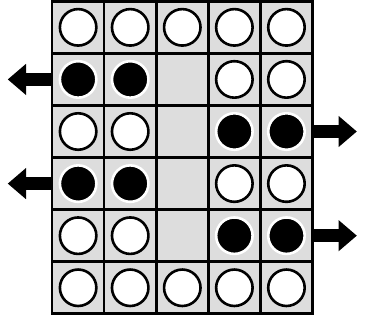}}

  \medskip

  \subcaptionbox{\label{fig:nothin_TRVB_gadget_broken} Broken solution}
    {\includegraphics{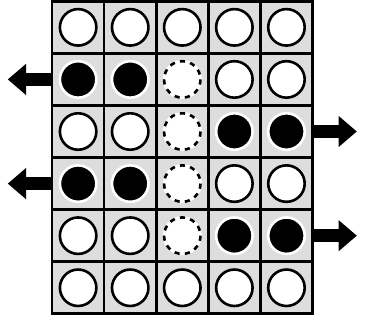}}\hfil
  \subcaptionbox{\label{fig:nothin_TRVB_gadget_unbroken} Unbroken solution}
    {\includegraphics{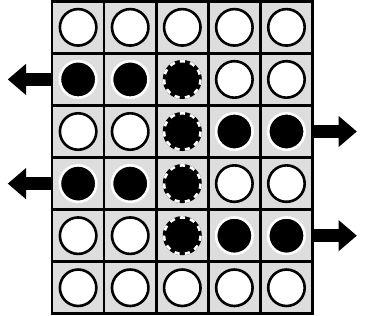}}
  \caption{Vertex gadget without the $2 \times 2$ constraint
    and its two possible local solutions.
    The arrows represent the four outgoing edges.}
  \label{fig:nothin_TRVB_gadget_all}
\end{figure}

\paragraph{Vertex Gadget.}
Figure~\ref{fig:nothin_TRVB_gadget} shows the vertex gadget.
If a solution fills the top empty cell (say) black or white,
then by repeated application of Lemma~\ref{lem:diagnal},
all four empty cells must be filled the same color.
In the local solution of Figure~\ref{fig:nothin_TRVB_gadget_broken},
the black-circle wires are disconnected from each other; while
in the local solution of Figure~\ref{fig:nothin_TRVB_gadget_unbroken},
the black-circle wires are all connected together.
Thus these two local solutions correspond to breaking and not breaking a
vertex of the TRVB instance.

\paragraph{Layout.}
We take an orthogonal grid drawing of the given TRVB graph,
and then scale the grid by a factor of 9.
This scaling allows us to place the $9 \times 9$ tiled versions of the
vertex gadget and edge gadgets (corners and straights) shown in
Figure~\ref{fig:nothin_routing},
which make it easy to align black-circle wires in row 3 and column 6.
All remaining cells are filled with white circles,
leaving empty only the four cells in each vertex gadget.

\begin{figure}
  \centering
  \subcaptionbox{Vertex gadget}{\includegraphics{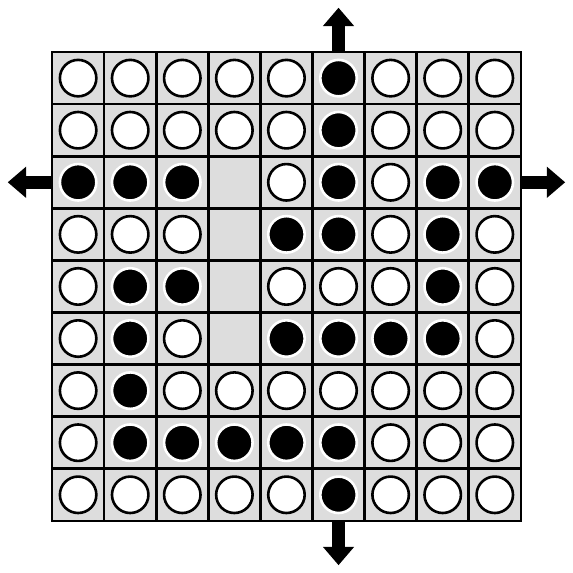}}

  \bigskip

  \subcaptionbox{One of six edge gadgets}{\includegraphics{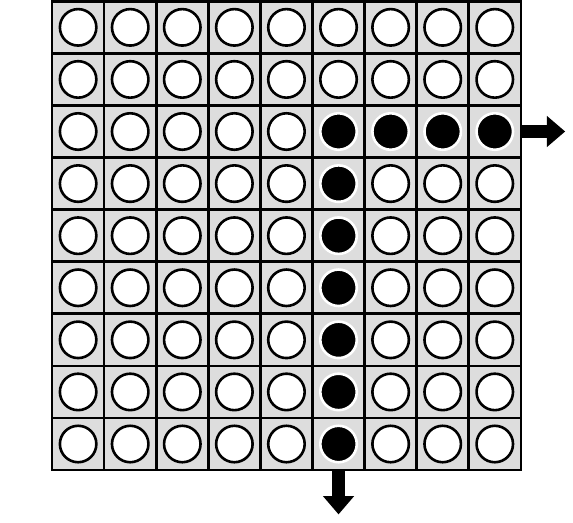}}
  \caption{$9 \times 9$ tiling versions of gadgets without the $2 \times 2$ constraint.}
  \label{fig:nothin_routing}
\end{figure}

\begin{theorem}
	It is NP-complete to decide whether there is a solution to a Yin-Yang puzzle
  without the $2 \times 2$ constraint,
  or Grid Graph Connected Partition Completion,
  on an $n \times n$ grid.
\end{theorem}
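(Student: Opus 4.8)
The plan is to prove NP-hardness by a polynomial-time reduction from Planar 4-Regular TRVB, membership in NP having already been observed. Given a 4-regular planar multigraph $H$, I would first draw $H$ with grid-routed edges in a square grid of area $O(|V(H)|^2)$ (via the multigraph embedding discussed above), scale the drawing by a factor of $9$, and overlay the $9 \times 9$ vertex and edge gadgets of Figure~\ref{fig:nothin_routing}: a vertex gadget at each vertex of the drawing, a straight or corner edge gadget at each cell traversed by a routed edge, and white circles at every remaining cell, so that the only empty cells are the four cells inside each vertex gadget. Finally I would enclose the picture in a white border and pad it to an $n \times n$ square; this merely enlarges the (necessarily connected) white background and changes nothing. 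The construction is plainly polynomial, so the content is the correctness claim: the resulting instance of Yin-Yang without the $2 \times 2$ constraint (equivalently, of Grid Graph Connected Partition Completion) is solvable if and only if $H$ is a yes-instance of TRVB.

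To analyze correctness I would first pin down the solution space. Since only the four cells of each vertex gadget are uncolored, a candidate solution amounts to choosing a color for those four cells at each vertex of $H$; and by repeated application of Lemma~\ref{lem:diagnal} (exactly as noted for Figure~\ref{fig:nothin_TRVB_gadget}) those four cells must receive a common color in any valid solution, putting each vertex gadget into either the unbroken state of Figure~\ref{fig:nothin_TRVB_gadget_unbroken} or the broken state of Figure~\ref{fig:nothin_TRVB_gadget_broken}. Let $S$ be the set of vertices whose gadget is broken. Inspecting the gadgets, every edge gadget carries a single black wire, so the routed edges of the drawing correspond to the edges of $H$; an unbroken vertex gadget connects its four incident wires together like an intact degree-$4$ vertex, while a broken vertex gadget leaves its four incident wires pairwise disconnected and dead-ending in the white region, like four degree-$1$ stubs. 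Hence the black cells induce a subgraph whose component structure matches that of the graph $H_S$ obtained from $H$ by breaking exactly the vertices of $S$; in particular the black cells form a connected subgraph if and only if $H_S$ is connected.

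The remaining ingredient --- and the heart of the reduction --- is the white region. Any cycle in $H_S$ corresponds to a closed loop of black wires which, because the wires have width $1$ and the $9\times$ scaling keeps distinct wires far apart, encloses at least one white cell; by the discrete Jordan-curve theorem that black loop separates the white cells it encloses from the white border, so the white region is disconnected. Conversely, one checks from the gadget figures that when $H_S$ is acyclic the black cells form only thin ``tree-shaped'' pieces (any small gadget-internal cycle encloses no white cell), around which the white region routes freely and stays connected --- this is precisely what the roomy $9 \times 9$ gadgets guarantee. Putting the pieces together: a valid solution exists iff some choice of $S$ makes $H_S$ both connected (from the black constraint) and acyclic (from the white constraint), i.e.\ makes $H_S$ a single tree, which is exactly the TRVB condition on $H$. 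With NP membership, this proves the theorem.

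I expect the main obstacle to be making the white-region argument of the last paragraph fully rigorous: on one hand ruling out any ``accidental'' black cycle coexisting with a connected white region (including a loop hugging the outer border or threading through several gadgets), and on the other hand verifying, by a finite but careful case check over the vertex gadget's two states and the several edge gadgets (straights and corners), that a thin acyclic black subgraph never disconnects white and that the gadget interfaces compose correctly along row $3$ and column $6$. A secondary technical point, already flagged above, is realizing the multigraph embedding --- subdividing parallel edges and loops, applying the simple-graph grid-drawing algorithm, and contracting the subdivision vertices --- so that all edges remain grid-routed and the TRVB answer is unaffected.
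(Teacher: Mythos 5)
Your proposal is correct and follows essentially the same route as the paper: the identical $9\times 9$ layout, the use of Lemma~\ref{lem:diagnal} to force each vertex gadget into the broken/unbroken state, black connectivity matching connectivity of the broken graph, and a Jordan-curve argument equating black cycles with disconnection of the white region. The only cosmetic difference is that you make the padding to an $n\times n$ square and the white border explicit, which the paper leaves implicit.
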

\begin{proof}
	If the TRVB instance has a solution, we fill broken vertices with four white circles (as in Figure~\ref{fig:nothin_TRVB_gadget_broken}) and unbroken vertices with four black circles (as in Figure~\ref{fig:nothin_TRVB_gadget_unbroken}). All edge gadgets touching an unbroken vertex are connected, and thus the fact that the TRVB solution is connected ensures the black circles form a single connected component. For a broken vertex, the added white circles connect the white circles in the four faces of the graph drawing. The TRVB solution being a tree ensures that this yields a single white connected component.

  Conversely, consider any solution to the Yin-Yang puzzle.
  As argued above, each vertex gadget must be solved using one of the two
  valid local solutions from Figure~\ref{fig:nothin_TRVB_gadget_all},
  and therefore we can translate this Yin-Yang solution to an assignment
  of whether to break each vertex in the TRVB instance.
  We claim that this assignment is in fact a solution to the TRVB instance.
  The region containing the black circles in the Yin-Yang solution
  has the same shape (in particular, toplogy) as the graph resulting from
  breaking the broken vertices in the candidate solution to the TRVB instance.
  If the resulting graph were disconnected,
  then the black circles would also form a disconnected region,
  contradicting the Yin-Yang connectivity constraint on black.
  If the resulting graph had a cycle,
  then the black circles would also form a nontrivial cycle,
  which would separate the white circles interior and exterior to that cycle,
  contradicting the Yin-Yang connectivity constraint on white.
  Therefore the resulting graph is connected and acyclic,
  so we have a solution to the TRVB instance.
  %
\end{proof}

\subsection{Tree Partition: With $2 \times 2$ Constraint}

In this section, we prove NP-hardness of 
Grid Graph Tree Partition Completion as well as Yin-Yang puzzles
(with both constraints).
We fully precolor the boundary vertices to not form a cycle,
so these problems become equivalent.
The reduction idea is the same as the proof without the $2 \times 2$ constraint from Section~\ref{sec:nothin}, but respecting this constraint requires a more complicated filler, and because of this filler, a more complex vertex gadget.
To get a sense of the overall structure, refer ahead to
Figure~\ref{fig:big_example} for a full example.

\subsubsection{Background Filler}

The background filler consists of alternating columns of white and black circles, except for a full row of black circles at the top and a full row of white circles at the bottom (to ensure connectivity); see Figure~\ref{fig:filler}.
On top of this filler gadget, we draw vertex and edge gadgets.

\begin{figure}
  \centering
  \includegraphics{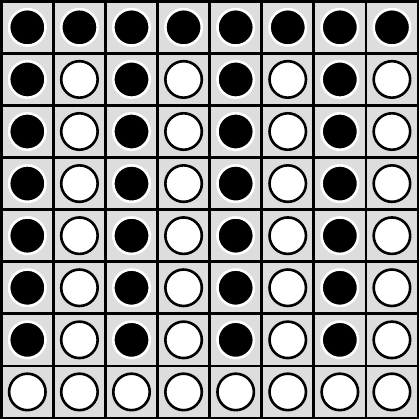}
  \caption{Background filler satisfying the $2 \times 2$ constraint.}
  \label{fig:filler}
\end{figure}

\subsubsection{Edge Gadget}

We again represent edges as orthogonal paths (wires) of black circles but now, wherever an edge travels horizontally, we also add a row of white circles immediately above the path, except where the edge turns upward; see Figure~\ref{fig:edge gadget}.
Because of the background filler, each horizontal segment of an edge gadget
will have black downward tendrils (paths) from every other column,
similar to the downward black tendrils from the top row of the filler.
(Similarly, the white circles immediately above a horizontal segment of an
edge gadget will have white upward tendrils from every other column,
similar to the upward white tendrils from the bottom row of the filler.)
Edge gadgets can travel vertically only on the black parity of the background
filler, so that these circles of the edge gadget match the background filler.

\begin{figure}
  \centering
  \raisebox{0.3in}{\includegraphics[scale=0.66]{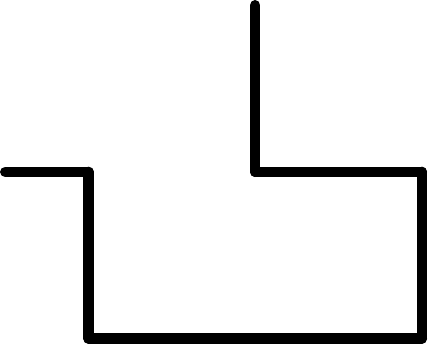}}\hfil
  \includegraphics{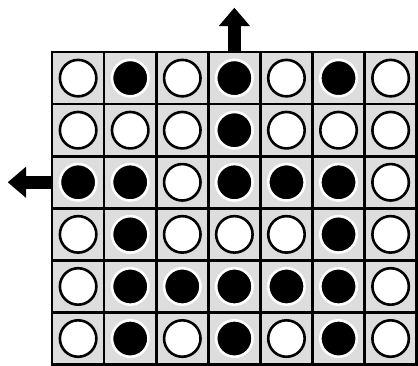}
  \caption{An edge route (left) and the corresponding edge gadget (right).}
  \label{fig:edge gadget}
\end{figure}

\subsubsection{Vertex Gadget}

Figure~\ref{fig:TRVB_vertex} shows the vertex gadget.
By the $2 \times 2$ constraint, the leftmost two empty cells must be filled
with circles of opposite color; refer to Figure~\ref{fig:TRVB_vertex_solutions}.
Once this choice gets made, repeated application of Lemma~\ref{lem:diagnal}
forces the coloring of the horizontal rows of empty squares to be uniform
and opposite from each other, and then forces the rightmost two empty cells
to match the leftmost two empty cells.
The top three empty cells are then forced to be filled in with circles of
the same color as the bottom row, in order to prevent local isolation of
black or white circles.

\begin{figure}[ht]
    \centering
    \includegraphics{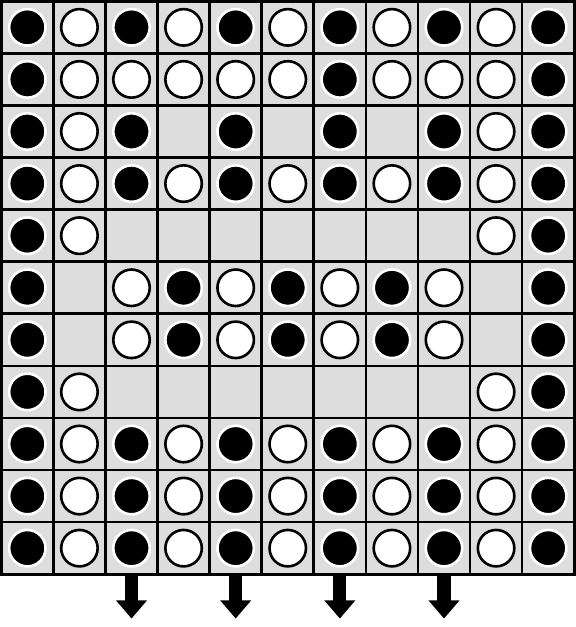}
    \caption{Vertex gadget.  The arrows represent the four outgoing edges.}
    \label{fig:TRVB_vertex}
\end{figure}

Figure~\ref{fig:TRVB_vertex_solutions}
shows the two possible resulting local solutions.
Figure~\ref{fig:TRVB_vertex_solution_broken}
corresponds to breaking the TRVB vertex,
as it keeps the four incident edges disconnected from each other;
while Figure~\ref{fig:TRVB_vertex_solution_unbroken}
corresponds to not breaking the corresponding TRVB vertex,
as it connects together the four incident edges.
Both solutions also connect together the five white paths at the top and
the two outermost white paths on the bottom, and the broken solution further
connects these white paths to the three middle white paths at the bottom.
These connections correspond to exactly one white region per face around the
vertex: one face for a broken vertex and four faces for an unbroken vertex.

\begin{figure}
  \begin{subfigure}{\columnwidth}
    \centering
    \includegraphics{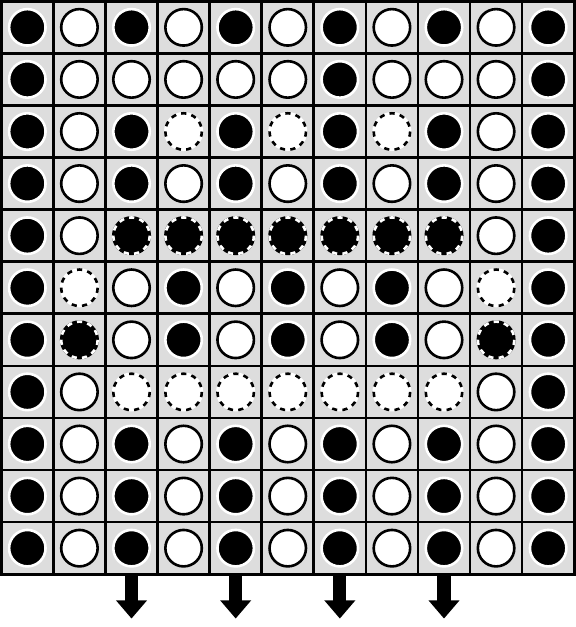}
    \caption{Broken vertex solution.}
    \label{fig:TRVB_vertex_solution_broken}
  \end{subfigure}

  \bigskip

  \begin{subfigure}{\columnwidth}
    \centering
    \includegraphics{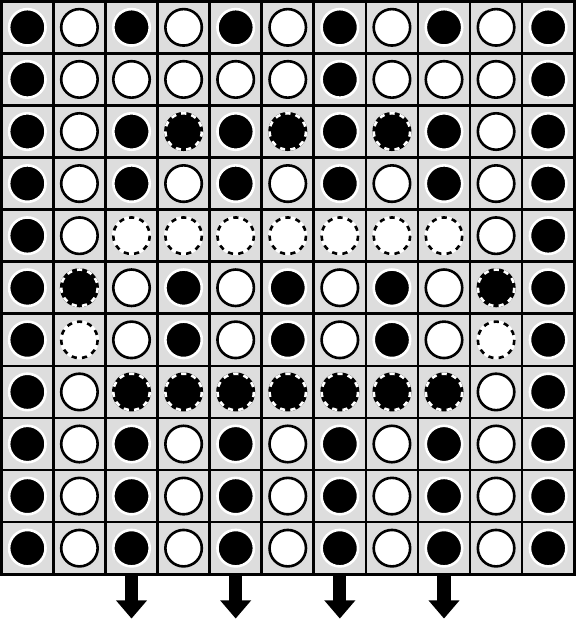}
    \caption{Unbroken vertex solution.}
    \label{fig:TRVB_vertex_solution_unbroken}
  \end{subfigure}
  \caption{The two local solutions to the vertex gadget of
    Figure~\ref{fig:TRVB_vertex}.}
  \label{fig:TRVB_vertex_solutions}
\end{figure}

\subsubsection{Reduction Layout}
\label{sec:layout}

We take an orthogonal grid drawing of the given TRVB graph,
and then scale the grid by a factor of 16.
This scaling allows us to place the $16 \times 16$ tiled versions of the
vertex gadget and edge gadgets (corners and straights)
shown in Figure~\ref{fig:16x16_routing},
which make it easy to align black-circle wires in row 12 and column 11.
Any absent $16 \times 16$ squares get filled with alternating columns of
black and white.
Then we add an extra row at the top and bottom of the puzzle,
filled with black and white circles respectively, to complete the
filler of Figure~\ref{fig:filler}.
Finally, for one topmost horizontal segment of an edge gadget,
in a black-parity column, we change one of the cells above the segment
from white to black, thereby attaching the edge gadget to a black tendril
and thus the top row of black circles;
we call the changed cell the \emph{exceptional cell}.

\begin{figure}
  \centering
  \subcaptionbox{Vertex gadget}
    {\includegraphics[width=\columnwidth]{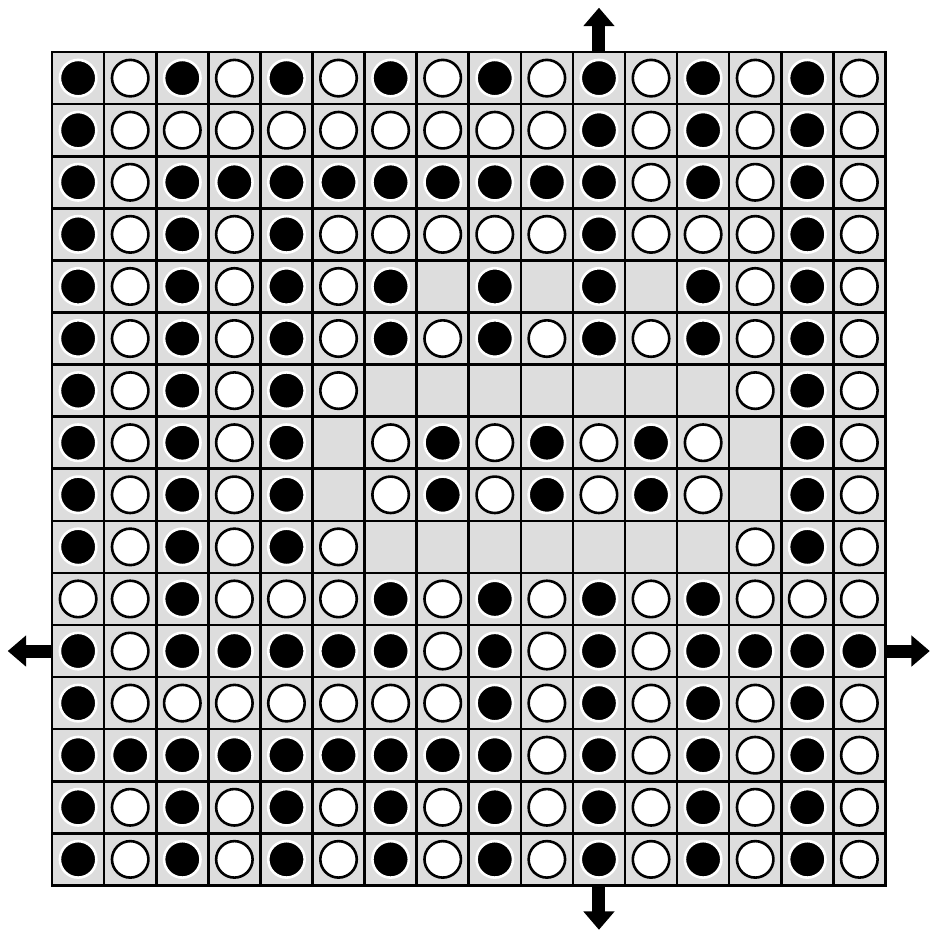}}

  \bigskip

  \subcaptionbox{One of six edge gadgets}
    {\includegraphics[width=\columnwidth]{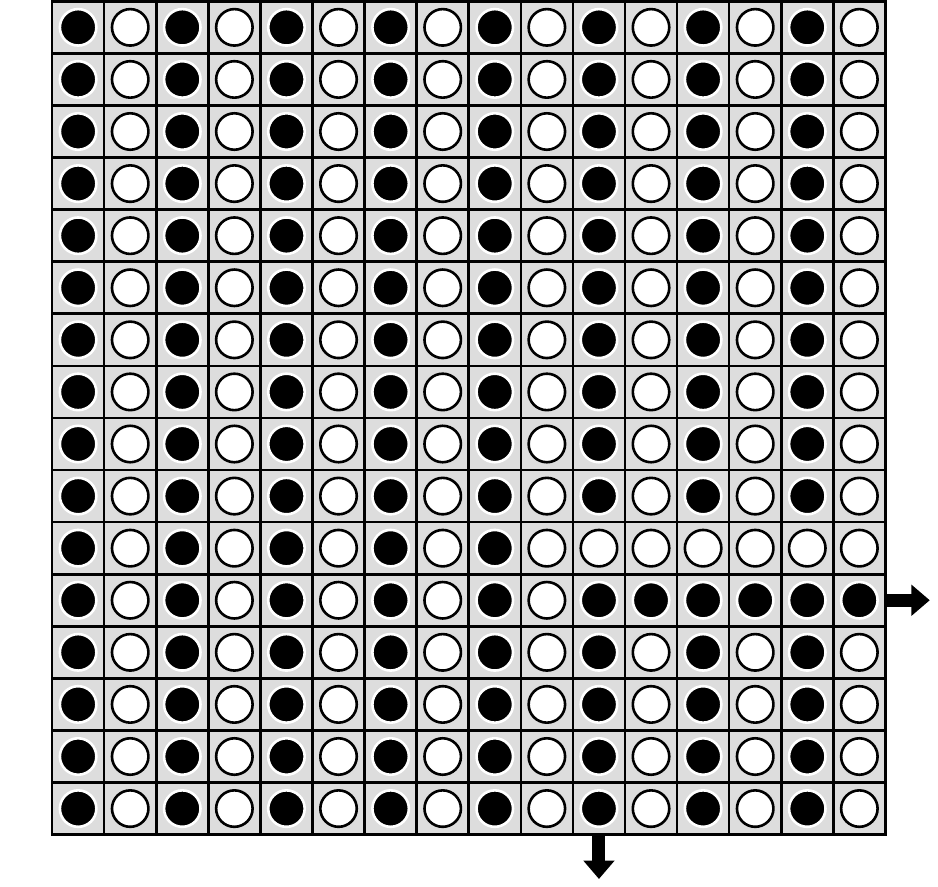}}
  \caption{$16 \times 16$ tiling versions of the vertex and edge gadgets
    from Figures~\ref{fig:TRVB_vertex} and~\ref{fig:edge gadget}.}
  \label{fig:16x16_routing}
\end{figure}

Figure~\ref{fig:big_example} shows a complete example of the reduction
applied to the instance from Figure~\ref{fig:TRVB_instance},
and Figure~\ref{fig:big_example_solved} shows the corresponding solution.

\begin{figure*}
	\centering

  \gdef\overlayfig#1{%
    \vspace{0.0125\linewidth} 

    \begin{overpic}[width=\linewidth]{#1}
      \color{purple}
      \linethickness{2pt}
      \put(0,100){\includegraphics[width=\linewidth]{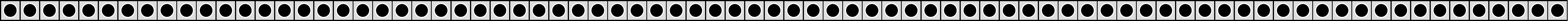}}
      \put(0,-1.25){\includegraphics[width=\linewidth]{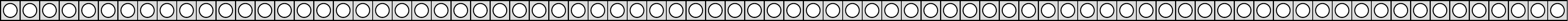}}
      \def\eps{0.175}
      \def\twoeps{.350}
      \put(-\eps,  0){\line(1,0){100\twoeps}}
      \put(-\eps, 20){\line(1,0){100\twoeps}}
      \put(-\eps, 40){\line(1,0){100\twoeps}}
      \put(-\eps, 60){\line(1,0){100\twoeps}}
      \put(-\eps, 80){\line(1,0){100\twoeps}}
      \put(-\eps,100){\line(1,0){100\twoeps}}
      \put(  0,-\eps){\line(0,1){100\twoeps}}
      \put( 20,-\eps){\line(0,1){100\twoeps}}
      \put( 40,-\eps){\line(0,1){100\twoeps}}
      \put( 60,-\eps){\line(0,1){100\twoeps}}
      \put( 80,-\eps){\line(0,1){100\twoeps}}
      \put(100,-\eps){\line(0,1){100\twoeps}}
    \end{overpic}

    \vspace{0.0125\linewidth} 
  }%
  \overlayfig{big_example_16x16}

	\caption{Full reduction from the 5-vertex instance of TRVB from
    Figure~\ref{fig:TRVB_instance}.}
	\label{fig:big_example}
\end{figure*}

\begin{figure*}
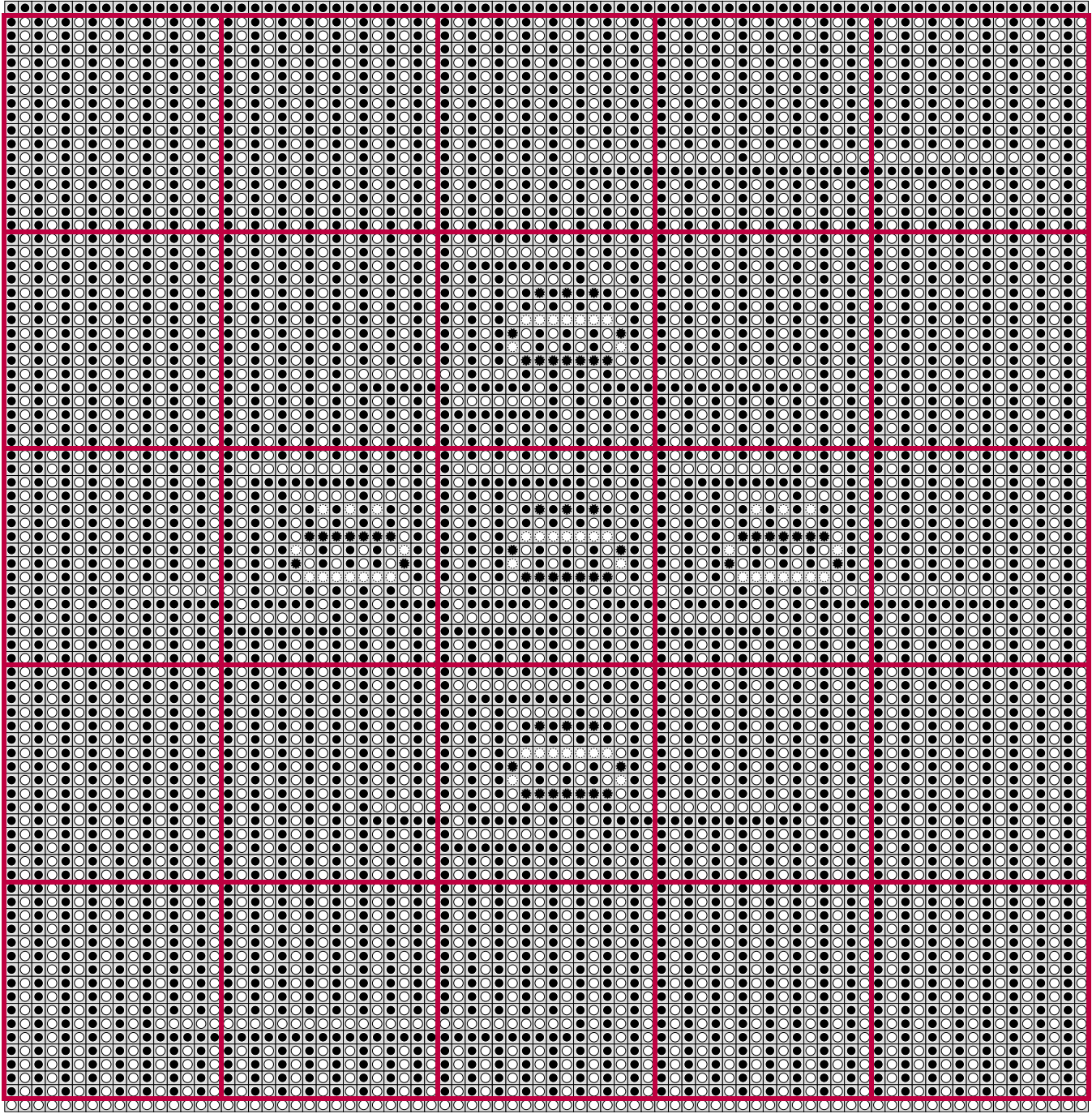

	\centering

  \overlayfig{big_example_16x16_solved}

	\caption{Solution to the puzzle in Figure~\ref{fig:big_example}
    corresponding to the TRVB solution in Figure~\ref{fig:TRVB_instance}.}
	\label{fig:big_example_solved}
\end{figure*}




\begin{theorem}
  It is NP-complete to decide whether there is a solution to a Yin-Yang puzzle
  (with both constraints),
  or Grid Graph Tree Partition Completion,
  on an $n \times n$ grid.
\end{theorem}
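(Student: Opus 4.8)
The plan is to adapt the reduction of Section~\ref{sec:nothin} from the Planar 4-Regular TRVB problem, replacing the ``all white'' filler by the filler of Figure~\ref{fig:filler} and the simple gadgets by the edge and vertex gadgets of this subsection, so that the $2\times 2$ constraint is satisfied globally. Before starting, I would record the equivalence asserted at the top of the subsection: since the constructed instance fully precolors its boundary with a top row of black cells and a bottom row of white cells, the boundary is not a monochromatic cycle, so (by the discussion in the introduction) any coloring satisfying the two Yin-Yang constraints partitions the grid graph into two induced trees, and conversely. Hence it suffices to argue the reduction for the Yin-Yang formulation with both constraints.

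For the forward direction, suppose the TRVB instance has a solution $S$, a set of vertices whose breaking leaves a single tree. Fill each vertex gadget with the broken local solution (Figure~\ref{fig:TRVB_vertex_solution_broken}) if its vertex is in $S$ and with the unbroken local solution (Figure~\ref{fig:TRVB_vertex_solution_unbroken}) otherwise, fill every remaining cell as the filler/edge construction prescribes, and color the exceptional cell black. I would then verify three things. (i) The $2\times 2$ constraint holds: inside the filler the columns strictly alternate (the extra top and bottom rows only create $2\times 2$ blocks with three cells of one color), inside each $16\times 16$ edge and vertex tile it holds by inspection of Figures~\ref{fig:16x16_routing} and~\ref{fig:TRVB_vertex_solutions}, and across tile boundaries it holds because edge gadgets run vertically only on black-parity columns while the white strip above a horizontal edge lies on white-parity columns, so each tile agrees with the filler it abuts. (ii) The black cells form exactly the graph obtained by breaking the vertices of $S$ (the edge wires realize the edges; the unbroken and broken vertex gadgets connect, respectively disconnect, their four incident edges, as the figures show), together with some dangling black tendrils, joined at the single exceptional cell to the fixed connected comb made of the top row and its downward tendrils; since the broken graph is a tree, this union is a tree, so black is a connected induced tree. (iii) Dually, the white cells within each face of the broken graph form a single connected region (merging the per-vertex and per-edge contributions described after Figure~\ref{fig:TRVB_vertex_solutions}, with the bottom-row comb in the outer face), so since a tree has one face the white cells are connected, and they are acyclic because a white cycle would enclose black cells and disconnect the (connected) black region, the outer boundary not being a white cycle. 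Thus both color classes are induced trees and the coloring is a valid Yin-Yang solution.

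For the reverse direction, take any valid Yin-Yang solution. Every cell outside the vertex gadgets is precolored, so the only freedom is within the vertex gadgets, and as argued in the vertex-gadget subsection the $2\times 2$ constraint together with Lemma~\ref{lem:diagnal} forces each vertex gadget into one of its two local solutions; let $S$ be the set of vertices placed in the broken configuration. Then, exactly as in step (ii) above, the black cells are the graph obtained by breaking $S$ (plus dangling tendrils), joined to the fixed top-row comb at the single exceptional cell. If breaking $S$ left a disconnected graph, the black region would be disconnected, violating black connectivity; if breaking $S$ left a cycle, the black region would contain a cycle distinct from the outer boundary (which is precolored and is not a black cycle), which would separate white cells inside it from white cells outside, violating white connectivity. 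Hence breaking $S$ yields a connected acyclic graph, so $S$ solves the TRVB instance. Since the reduction runs in polynomial time (the grid has area $O(n^2)$) and all the problems lie in NP, NP-completeness follows.

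The step I expect to be the main obstacle is the detailed verification hidden in points (i)--(iii): the black downward tendrils from the filler, from horizontal edge segments, and from vertex gadgets, together with the corresponding white upward tendrils, all interleave, and one must check that they knit the black (respectively white) cells together in precisely the pattern of the broken graph (respectively its faces) --- with no extra connections, which would create cycles and destroy the tree property, and no missing ones, which would destroy connectivity --- all while never completing a monochromatic $2\times 2$ block. The most delicate single point is the role of the exceptional cell in reconciling the two black pieces (the top-row comb and the edge/vertex structure) without introducing a cycle.
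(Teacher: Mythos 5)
Your proposal is correct and follows essentially the same route as the paper: both directions hinge on the black region having the topology of the TRVB graph after breaking the chosen vertices, with the tendrils and top-row comb attached as dangling trees through the exceptional cell, and with white's tree property deduced from black's. The paper merely packages the tendril bookkeeping you flag as the main obstacle via an explicit notion of ``important'' black cells, but the substance of the argument is the same.
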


\begin{proof}
  Because our reduction instance has black and white circles on the boundary,
  the Yin-Yang puzzle with $2 \times 2$ constraint
  is equivalent to Grid Graph Tree Partition Completion.
  Furthermore, if the black circles form a connected and acyclic
  subset of cells, then so do the white circles: if the white circles formed
  an induced cycle of length $> 4$, then there would be black circles
  both interior (because the cycle is induced) and exterior
  (on the boundary), a contradiction.
  Therefore, it suffices to prove that the black circles are
  connected and acyclic if and only if the chosen broken/unbroken solutions 
  from Figure~\ref{fig:TRVB_vertex_solutions} for each vertex gadget
  corresponds to a solution to the TRVB instance.

  Define the \emph{important} cells to consist of the following:
  \begin{enumerate}
  \item black edge wires, or more precisely, the shortest paths
    among black circles connecting pairs of the vertex gadgets' ports
    (marked with arrows in Figure~\ref{fig:TRVB_vertex}); and
  \item the bottom row of initially empty cells in each vertex gadget (instance
    of Figure~\ref{fig:TRVB_vertex}), together with the shortest paths of black
    circles within the gadget connecting those cells to the ports (arrows).
  \end{enumerate}
  The important cells directly represent the connectivity of the TRVB instance,
  so the important black circles are connected and acyclic if and only if the
  Yin-Yang solution corresponds to a TRVB solution.

  Thus it suffices to show that the important black circles are connected and
  acyclic if and only if all of the black circles are connected and acyclic.
  Unimportant black circles are either part of a vertex gadget,
  part of the top row, or part of a downward black tendril
  from the top row or a horizontal segment of an edge gadget.
  Figure~\ref{fig:TRVB_vertex_solutions} shows that unimportant black circles
  within a vertex gadget are all attached to the downward black tendrils above
  the gadget, with one acyclic connected component per tendril,
  aside from a few black circles (in the unbroken case)
  attached directly to important black circles.
  For unimportant black circles in black tendrils, we can trace the
  tendrils up to find their connection to either important black circles
  or the top row.
  (Tendrils are never connected to black circles at their bottom,
  except to $O(1)$ black circles within a vertex gadget.)
  The black tendrils that drop from an important black circle are all connected
  together if and only if the important black circles are themselves connected.
  The black tendrils that drop from the top row are all connected together
  via the top row of unimportant black circles, and the exceptional cell
  connects one of these tendrils to the important black circles.
  Therefore the unimportant black circles form trees attached to the
  important black cells, so they do not affect connectivity or acyclicity.
\end{proof}

\section{Open Problems}
Our reduction from Planar 4-regular TRVB is parsimonious (preserves the number of solutions).
It is unknown whether the Another Solution Problem or counting versions of TRVB are (NP- or \#P-) hard,
but such results would carry over to Yin-Yang puzzles.

In our reductions, we fill almost the entire board with circles. However, elegant puzzles tend to have only a few pre-marked circles and so one may wonder whether this problem remains hard with a much sparser clue set. It seems likely small modifications to our reduction would allow for reductions with only $O(n)$ circles, but getting $o(n)$ would require fundamentally different techniques and some sort of clever encoding of problems in the relative spacing of the circles. On the extreme, it seems reasonably likely this problem is FPT with respect to the number of pre-assigned circles.

Both from a computational and puzzle design standpoint, we could imagine generalizing the number of colors of circles to more than two, or looking at other families of graphs. The $2 \times 2$ constraint can generalize to forbidding all vertices around a face from being the same color in an embedded graph. It seems very natural to look at such puzzles on triangular and hexagonal grids and in those cases the three-fold symmetry may aesthetically adapt to three colors. Generalizing the number of colors trivially remains NP-hard, as we can simply add a row for each additional color at the top of our two color reduction, ensuring those new colors will not be able to be placed. We also believe similar proof techniques will work for hexagonal and triangular grids.

\section*{Acknowledgments}

This research started at the 34th Bellairs Winter Workshop on Computational
Geometry, co-organized by Erik Demaine and Godfried Toussaint in 2019.
We thank the other workshop participants for fruitful discussions
and for providing an inspiring research atmosphere. 

Research supported in part by NSERC and JSPS KAKENHI Grant Numbers
JP17K00017, JP21K11757 and JP20H05964.

Most figures of this paper were drawn using SVG Tiler
[\url{https://github.com/edemaine/svgtiler}].
Source files for these figures are freely available
[\url{https://github.com/edemaine/yin-yang-svgtiler}].

\bibliographystyle{plain}
\bibliography{yinyang}


\end{document}